\renewcommand{\glossarysection}[2][]{}
\newtheorem{thm}{Theorem}[section]
\theoremstyle{plain}
\newtheorem{lem}[thm]{Lemma}
\declaretheoremstyle[headfont=\normalfont]{normalhead}
\setlist{itemsep=0pt}
\newcommand{\splitatcommas}[1]{%
  \begingroup
  \begingroup\lccode`~=`, \lowercase{\endgroup
    \edef~{\mathchar\the\mathcode`, \penalty0 \noexpand\hspace{0pt plus 1em}}%
  }\mathcode`,="8000 #1%
  \endgroup
}
\newcommand{\overbar}[1]{\mkern 1.5mu\overline{\mkern-1.5mu#1\mkern-1.5mu}\mkern 1.5mu}
\def\ps@pprintTitle{%
 \let\@oddhead\@empty
 \let\@evenhead\@empty
 \def\@oddfoot{\centerline{\thepage}}%
 \let\@evenfoot\@oddfoot}
\begin{document}

\begin{frontmatter}

\title{Data-Driven Sensitivity Indices for Models With Dependent Inputs Using Polynomial Chaos Expansion{\color{black}s}}


\author[mymainaddress]{Zhanlin Liu}

\author[mymainaddress]{Youngjun Choe\corref{mycorrespondingauthor}}
\cortext[mycorrespondingauthor]{Corresponding author}
\ead{ychoe@uw.edu}

\address[mymainaddress]{Department of Industrial and Systems Engineering, University of Washington, Seattle, WA 98195, USA}

\begin{abstract}

Uncertainties exist in both physics-based and data-driven models. Variance-based sensitivity analysis characterizes how the variance of a model output is propagated from the model inputs. The Sobol index is one of the most widely used sensitivity indices for models with independent inputs. For models with dependent inputs, different approaches have been explored to obtain sensitivity indices in the literature. Typical approaches are based on procedures of transforming the dependent inputs into independent inputs. However, such transformation requires additional information about the inputs, such as the dependency structure or the conditional probability density functions. In this paper, data-driven sensitivity indices are proposed for models with dependent inputs. We first construct ordered partitions of linearly independent polynomials of the inputs. The modified Gram-Schmidt algorithm is then applied to the ordered partitions to generate orthogonal polynomials with respect to the empirical measure based on observed data of model inputs and outputs. Using the polynomial chaos expansion with the orthogonal polynomials, we obtain the proposed data-driven sensitivity indices. The sensitivity indices provide intuitive interpretations of how the dependent inputs affect the variance of the output without a priori knowledge on the dependence structure of the inputs. {\color{black}Four} numerical examples are used to validate the proposed approach.

\end{abstract}
\begin{keyword}
uncertainty quantification \sep global sensitivity analysis \sep variance-based sensitivity analysis \sep Sobol index \sep polynomial chaos expansion \sep Gram-Schmidt orthogonalization 
\end{keyword}

 \end{frontmatter}

\section{Introduction}
Uncertainties exist in both physics-based and data-driven models. Uncertainty quantification (UQ) methods to characterize and reduce those uncertainties are increasingly popular in engineering studies. As an aspect of UQ, sensitivity analysis (SA) quantifies how output uncertainties are propagated from input uncertainties. Two general ways of conducting SA are local sensitivity analysis (LSA) and global sensitivity analysis (GSA). LSA analyzes how a small perturbation near an input space value could influence the output. On the contrary, GSA investigates how the input variability influences the output variability over the entire input space. In recent studies, variance-based sensitivity analysis, as a form of GSA, is utilized to understand system uncertainties in various applications such as material mechanics \cite{kala:2015sensitivity}, building energy \cite{sanchez:2014application}, structural mechanics \cite{xu2018cubature}, hydrogeology \cite{deman:2016using}, and manufacturing \cite{fesanghary2009design}.

Conducting variance-based sensitivity analysis for models with \textit{independent} inputs has been studied widely. Monte Carlo simulation and surrogate models are two general ways to obtain sensitivity indices for models with independent inputs. Surrogate models have been shown to be more computationally efficient compared with Monte Carlo simulation \cite{sudret:2012meta}. 
Polynomial chaos expansion (PCE) and Kriging (also known as Gaussian process regression) are the two surrogate models which have been used to compute sensitivity indices most commonly in the literature {\color{black}\cite{sudret:2012meta,le2017metamodel}}. Thanks to the orthogonal property of a PCE model, sensitivity indices for \textit{independent} inputs can be directly obtained using PCE coefficients {\color{black}\cite{sudret2006global,sudret2008global,le2017metamodel}}. PCE-based sensitivity indices appear in various fields including fluid dynamics \cite{Witteveen:2007}, structural reliability \cite{marelli2018active}, and vehicle dynamics \cite{Kewlani:2012}.

For models with \textit{dependent} inputs, a limited number of approaches are available in the literature to conduct variance-based sensitivity analyses. Generalized Sobol sensitivity indices have been proposed in Chastaing et al. \cite{chastaing2012generalized} based on the hierarchically orthogonal functional decomposition (HOFD). However, the unboundedness of the resulting sensitivity indices makes their interpretation for \textit{dependent} inputs not as straightforward as the Sobol indices for models with \textit{independent} inputs \cite{Navarro:2014}. A different framework is proposed in \cite{zhang2015new} to obtain sensitivity indices for models with correlated inputs. However, it requires the knowledge of model structure between the inputs and the outputs. An alternative way of obtaining sensitivity indices for models with \textit{dependent} inputs is to transform \textit{dependent} inputs into \textit{independent} inputs \cite{mara:2012, mara:2015non, tarantola2017variance}. Even though the transformation-based methods generate interpretable sensitivity indices, they require strong assumptions on the dependency or distributions of the inputs. 


The main contribution of this paper is the development of a data-driven method to obtain interpretable sensitivity indices for models with dependent inputs without invoking any assumptions on the inputs. 
We first propose the modified Gram-Schmidt based polynomial chaos expansion (mGS-PCE). The mGS-PCE increases the numerical robustness of constructing orthogonal polynomials for arbitrarily distributed inputs compared with the GS-PCE in \cite{Witteveen:2007}. 
Then, we propose a method to obtain data-driven sensitivity indices for models with dependent inputs by constructing ordered partitions of orthonormal polynomials of the inputs. This method estimates some of the sensitivity indices in \cite{mara:2012} and \cite{mara:2015non} without invoking the assumptions therein. Lastly, we propose conditional order-based sensitivity indices, which explain the model output variability in a hierarchical manner. 


The remainder of the paper is organized as follows: Section \ref{sec:2} reviews the background knowledge about Sobol indices and PCE models. Section \ref{sec:3} introduces the modified Gram-Schmidt algorithm and our data-driven method to obtain sensitivity indices for models with dependent inputs using PCE models. In Section \ref{sec:4}, {\color{black}four} numerical examples, where the inputs are dependent, are used to validate our proposed method. Section \ref{sec:5} provides a few concluding remarks and a discussion on future research directions.

\section{Technical background}
\label{sec:2}
This section briefly reviews sensitivity indices in the existing literature for models with independent inputs and those with dependent inputs. We first introduce the Hoeffding functional decomposition and the Sobol indices for independent inputs. We then review the \textit{full} sensitivity indices and the \textit{uncorrelated} sensitivity indices defined for models with \textit{dependent} inputs. Lastly, we introduce PCE models and explain how PCE coefficients can be used to calculate sensitivity indices.

\subsection{Hoeffding decomposition and sensitivity indices for independent inputs}

Suppose we have $n$ independent random inputs $\boldsymbol X = (X_{1}, X_{2}, \cdots, X_{n})$ with their density  $\mu(\boldsymbol X)$. For the output $Y = f(\boldsymbol X)$ that is square-integrable with respect to $\mu(\boldsymbol X)$, its Hoeffding decomposition 
is defined as follows \cite{sobol:1993,chastaing2012generalized}: 
\begin{equation}
\begin{aligned}
\label{eq:anova}  
f(\boldsymbol X) = \sum_{u \subseteq \{1,2,\ldots, n\}}f_{u}(\boldsymbol X_{u}), 
\end{aligned}
\end{equation}
where $f_{\emptyset} = f_{0}$ and $f_{0}$ is a constant and $\boldsymbol X_{u}$ = $\left( X_{j}\right)_{j \in u}$. Each summand $f_{u}(\boldsymbol X_{u}), u\neq \emptyset$, in Eq.~\eqref{eq:anova} satisfies 
\begin{equation*}
\begin{aligned}
\int f_{u}(\boldsymbol X_{u})\mu(X_{i}) \,\mathrm{d}X_{i} &= 0,\quad \forall i \in u. 
\end{aligned}
\end{equation*}
such that
\begin{equation*}
\begin{aligned}
f_{0} &= \int f(\boldsymbol X) \mu(\boldsymbol X) \,\mathrm{d}\boldsymbol X. \\ 
\end{aligned}
\end{equation*}
In addition, the summands in Eq.~\eqref{eq:anova} are orthogonal to each other as follows:
\begin{equation*}
\begin{aligned}
\int f_{u}(\boldsymbol X_u)f_{v}(\boldsymbol X_v) \mu(\boldsymbol X) \,\mathrm{d}\boldsymbol X &= 0,\quad \forall u, v \subseteq \{1,2,\ldots, n\}, v\neq u. 
\end{aligned}
\end{equation*}

 Based on the Hoeffding decomposition, the variance of $Y$ is decomposed as follows {\color{black}\cite{sudret2006global,sudret2008global}}:
\begin{equation*}
\begin{aligned}
Var(Y) &= \int f^{2}(\boldsymbol X)\mu(\boldsymbol X)\,\mathrm{d}\boldsymbol X - f_{0}^{2}\\
&= \sum_{\substack{ u \subseteq \{1,2,\ldots, n\} \\ u\neq \emptyset }}D_{u}(Y), \\
\end{aligned}
\end{equation*}
where 
\begin{equation}
\begin{aligned}
\label{eq:anova3}
D_{u}(Y) &= \int f^{2}_{u}(\boldsymbol X_{u})\mu(\boldsymbol X_{u}) \,\mathrm{d}\boldsymbol X_{u} \\
&=Var\!\left(E\!\left(Y\mid\boldsymbol X_{u}\right)\right) - \sum_{\substack{ v \subset u\\ v \neq u \\ v \neq \emptyset }}D_{v}(Y). 
\end{aligned}\nonumber
\end{equation}
For example, $D_{i}(Y) = Var\!\left(E\!\left(Y\mid X_{i}\right)\right)$ and $D_{ij}(Y) = Var\!\left(E\!\left(Y\mid X_{i}, X_{j}\right)\right)-D_{i}(Y)-D_{j}(Y)$.

Based on the variance decomposition, the Sobol index for set $u$ is defined as  
\begin{equation*}
\begin{aligned}
S_{u} = \frac{D_{u}(Y)}{Var(Y)},  
\end{aligned}
\end{equation*}
which measures the sensitivity of the output variance with respect to the inputs in $\boldsymbol X_{u}$.
For a particular input variable $X_{i}$, the \textit{first-order} Sobol index $S_{X_{i}}$ and \textit{total} Sobol index $ST_{X_{i}}$ are defined as follows: 
\begin{equation}
\begin{aligned}
\label{eq:sobtot}
S_{X_{i}} &= \frac{D_{i}(Y)}{Var(Y)},\\
ST_{X_{i}} &= \sum_{u \ni {i}}S_{u}.
\end{aligned} \nonumber
\end{equation}
$S_{X_{i}}$ represents the percentage of the output variance that is propagated from the input $X_{i}$. $ST_{X_{i}}$ represents the percentage of the output variance that is propagated from the input $X_{i}$ and its interactions with the other variables.  

\subsection{Sensitivity indices for dependent inputs}

This study focuses on sensitivity indices proposed in \cite{mara:2012, mara:2015non, tarantola2017variance} because they are bounded and do not require the knowledge of the model structure between the inputs and the output in contrast to those considered in \cite{kucherenko2012estimation, chastaing2012generalized,zhang2015new}, as discussed earlier.   

In \cite{mara:2012}, the Gram-Schmidt algorithm is employed to decorrelate the inputs when the dependences are characterized solely by the inputs' first-order conditional moments. Then the \textit{full} sensitivity indices and the \textit{uncorrelated} sensitivity indices (also called \textit{independent} sensitivity indices in \cite{mara:2015non}) are defined. On the other hand, in order to calculate these sensitivity indices when conditional probability density functions (cPDFs) of the inputs are known, the inverse Rosenblatt transformation or the inverse Nataf transformation is applied to transform the \textit{dependent} inputs into the \textit{independent} inputs \cite{mara:2015non, tarantola2017variance}. 


Suppose dependent inputs $(X_{1}, X_{2}, \ldots, X_{n} )$ are transformed into independent inputs $(\bar{X}_{1}, \bar{X}_{2}, \ldots, \bar{X}_{n} )${\color{black}, for example, under the assumptions of \cite{mara:2012} such that $\bar{X}_{1} = X_{1}$ and $\bar{X}_{i} = X_{i} - E\!\left(X_{i}\mid \bar{X}_{1}, \ldots, \bar{X}_{i-1}\right)$, $\forall i=2,\ldots,n$}.   {\color{black}Intuitively speaking, $\bar{X}_{1}$ keeps all information concerning $X_1$ including its dependent part with the other inputs. $\bar{X}_{i}$ contains all information concerning ${X}_{i}$ \emph{except} its dependent part with $\bar{X}_{1}, \ldots,\bar{X}_{i-1}$. Thus, $\bar{X}_{n}$ only contains information of ${X}_{n}$ \emph{excluding} its dependent part with all the other inputs. These constructed \emph{independent} inputs allow for calculating the first-order Sobol indices ($S_{\bar{X}_{i}}$) and total Sobol indices ($ST_{\bar{X}_{i}}$).} Then the sensitivity indices with respect to the {\color{black}\emph{dependent}} inputs are defined as follows \cite{mara:2012}:

\begin{itemize}
\item[] $\bar{S}_{X_{1}} = S_{\bar{X}_{1}} $ is the first-order full contribution of $X_{1}$ to the variance of the output. 
\item[] $\overbar{ST}_{X_{1}} = ST_{\bar{X}_{1}} $ is the total full contribution of $X_{1}$ to the variance of the output.
\item[]  $S^{u}_{X_{n}} = S_{\bar{X}_{n}} $ is the first-order uncorrelated contribution of $X_{n}$ to the variance of the output. 
\item[]  $ST^{u}_{X_{n}} = ST_{\bar{X}_{n}} $ is the total uncorrelated contribution of $X_{n}$ to the variance of the output.
\end{itemize}

By permuting the order of the inputs, different sensitivity indices can be further calculated. Suppose the initial input variables are ordered as $(X_{i},X_{i+1}, \ldots, X_{n}, \- X_{1}, \ldots, X_{i-1} )$, and the constructed independent inputs are $(\bar{X}_{i},\bar{X}_{i+1}, \ldots, \bar{X}_{n}, \bar{X}_{1},\ldots,\- \bar{X}_{i-1} )$. Then the \textit{full} sensitivity indices $(\bar{S}_{X_{i}} = S_{\bar{X}_{i}} \text{ and } \overbar{ST}_{X_{i}}= ST_{\bar{X}_{i}})$ and the \textit{uncorrelated} sensitivity indices $(S_{X_{i-1}}^{u} = S_{\bar{X}_{i-1}} \text{ and } ST_{X_{i}}^{u} = ST_{\bar{X}_{i-1}})$ are defined. $\bar{S}_{X_{i}}$ is called the first-order \textit{full} sensitivity index and $\overbar{ST}_{X_{i}}$ is called the total \textit{full} sensitivity index. $S_{X_{i}}^{u}$ is called the first-order \textit{uncorrelated} sensitivity index and $ST_{X_{i}}^{u}$ is called the total \textit{uncorrelated} sensitivity index.


\subsection{PCE and PCE-based sensitivity indices}

As a way of calculating sensitivity indices, PCE is known to be more computationally efficient than Monte Carlo simulations {\color{black}\cite{sudret2006global,sudret2008global}}. The original PCE, which is proposed in \cite{Wiener:1938}, provides Hermite polynomials for independent Gaussian random variables. Several types of PCE have been proposed under the assumption of \textit{independence} between model inputs, including the generalized PCE (gPCE) \cite{Xiu:2002}, the multi-element generalized PCE (ME-gPCE) \cite{Xiaoliang:2006}, the moment-based arbitrary PCE (aPCE) \cite{Oladyshkin:2012} and the Gram-Schmidt based PCE (GS-PCE) \cite{Witteveen:2007}.

The GS-PCE for models with \textit{independent} inputs is extended to models with multivariate \textit{dependent} inputs in Navarro et al. \cite{Navarro:2014}. It is regarded as the pioneering work in constructing an orthogonal polynomial basis for arbitrary \textit{dependent} inputs. 
Rahman \cite{rahman2018polynomial} theoretically validates the Gram-Schmidt orthogonalization process 
to construct an orthogonal polynomial basis for the PCE with dependent {\color{black} inputs}.  

\subsubsection{PCE model}
PCE uses a finite number of orthonormal polynomial terms of $n$ random inputs in $\boldsymbol X$ to approximate the output $Y$ as follows:
\begin{equation}
\label{eq:exp}
Y=f(\boldsymbol X) \approx \sum_{i=0}^{P}\theta_{i}\psi_{i}(\boldsymbol X),
\end{equation}
where $\theta_{i}$, $i$ = 0,1,2,\ldots,$P$, are called PCE coefficients and $\psi_{i}$, $i$ = 1,2,\ldots,$P$ are orthonormal polynomials. 
\begin{equation}
\label{eq:exp2}
P +1 = \binom{n+p}{n}
\end{equation}
%
is the number of polynomial terms, where
$p$ is the highest polynomial degree in the PCE model. As $p$ increases, the accuracy of approximating a complex output function improves. {\color{black} In this paper, we estimate the PCE coefficients by solving an overdetermined linear system of equations in the least-squares sense as proposed in \cite{berveiller2006stochastic}. 
}

{
Thanks to the properties of orthonormal polynomials, we can approximate the lower order moments of output $Y$ directly using the PCE coefficients in (\ref{eq:exp}) as follows: 

\begin{equation}
\begin{aligned}
\label{eq:mom}
E(Y) &\approx \theta_{0},\\
Var(Y) &\approx \sum_{i=1}^{P}\theta_{i}^2.
\end{aligned}
\end{equation}
}
The approximation errors converge to zero as $P$ increases \cite{Cameron:1947}.

\subsubsection{PCE-based sensitivity indices}

For \textit{independent} inputs, the multivariate orthonormal polynomials $\psi_{i}(\boldsymbol X)$ can be directly constructed as the products of univariate orthonormal polynomials as follows {\color{black}\cite{sudret2006global,sudret2008global,le2017metamodel}}: 
\begin{equation*}
\psi_{i} (\boldsymbol X) = \psi_{\boldsymbol{\alpha}_{i}}(\boldsymbol X) = \prod_{j=1}^{n}\psi_{\alpha_{ij}}(X_{j}), 
\end{equation*}
where $\boldsymbol{\alpha}_{i}=(\alpha_{i1}, \alpha_{i2}, \ldots, \alpha_{in})$ and $\psi_{\alpha_{ij}}(X_{j})$ represents the $\alpha_{ij}-$th order orthonormal polynomial in input $X_{j}$.

Define $\mathscr{A}_{u}$ as the set of multi-indices depending exactly on the subset of variables $\boldsymbol{X}_{u}, u \subseteq \{1,2,\ldots,n\}$ as follows:

\begin{equation*}
\label{eq:Au1}
\mathscr{A}_{u}= \left \{\boldsymbol{\alpha}_{i} \in \mathbb{N}^{n}: \alpha_{ij}\neq 0  \,{\color{black}\Leftrightarrow}\, j \in u, |\boldsymbol{\alpha}_{i}| \leq p \right \}, 
\end{equation*}
where 
\begin{equation*}
\label{eq:Au12}
|\boldsymbol \alpha_{i}| = \sum_{j=1}^{n}\alpha_{ij}. 
\end{equation*}

Suppose $\theta_{\boldsymbol{\alpha}_{i}}$ is the PCE coefficient with respect to the polynomial term corresponding to $\boldsymbol{\alpha}_{i}$. Then the first-order Sobol index for $X_j$ and the total Sobol index for $X_j$ can be estimated for $j=1,\ldots,n$ as follows {\color{black}\cite{sudret2006global,sudret2008global,le2017metamodel}}: 
\begin{equation}
\begin{aligned}
\label{eq:sobeqs}
S_{X_{j}} &\approx \frac{\sum_{\boldsymbol{\alpha}_{i} \in \mathscr{A}_{\{j\}}}\theta_{\boldsymbol \alpha_{i}}^{2}}{\sum_{i=1}^{P}\theta_{i}^{2}},\\ 
ST_{X_{j}} &\approx \sum_{\mathscr{A}_{u \ni j} } S_{\mathscr{A}_{u}},\\
\end{aligned} \nonumber
\end{equation}
where 
\begin{equation}
\begin{aligned}
\label{eq:sobeqs2}
S_{\mathscr{A}_{u}} &\approx \frac{\sum_{\boldsymbol{\alpha}_{i} \in \mathscr{A}_{u}}\theta_{\boldsymbol \alpha_{i}}^{2}}{\sum_{i=1}^{P}\theta_{i}^{2}}.
\end{aligned}\nonumber
\end{equation}
Using these PCE-based Sobol indices, we can also obtain the sensitivity indices for \textit{dependent} inputs, which were described earlier.

\section{Methodology}
\label{sec:3}
In the previous section, we discussed the current methods proposed in \cite{mara:2012} and \cite{mara:2015non} of obtaining the sensitivity indices for models with \textit{dependent} inputs under certain assumptions on the inputs. 
In this section, we propose a data-driven method to estimate the sensitivity indices for models with \textit{dependent} inputs using a PCE model based on the orthonormal polynomials constructed from the modified Gram-Schmidt algorithm. First, we show how to construct orthonormal polynomials using the modified Gram-Schmidt algorithm. Then we propose a data-driven method to estimate the \textit{first-order full} sensitivity indices and the \textit{total uncorrelated} sensitivity indices for models with \textit{dependent} inputs. Then we propose an alternative \textit{total full} sensitivity index and an alternative \textit{first-order uncorrelated} sensitivity index, which can be also calculated using the proposed method. These alternative indices have different interpretations than those in \cite{mara:2012} and \cite{mara:2015non} because our decorrelation process does not eliminate dependences in inputs. In addition, we propose conditional order-based sensitivity indices and illustrate how they can be used to reduce the PCE model complexity by excluding higher order interaction terms.

\subsection{Modified Gram-Schmidt algorithm}

In \cite{Navarro:2014}, orthonormal polynomials are constructed using the Gram-Schmidt algorithm for general multivariate 
correlated variables. Even though the Gram-Schmidt algorithm behaves the same as the modified Gram-Schmidt algorithm mathematically, the modified Gram-Schmidt algorithm is less sensitive to numeric rounding errors and performs more stably than the Gram-Schmidt algorithm \cite{bjorck:1994}. Therefore, we propose to use the modified Gram-Schmidt algorithm to construct orthonormal polynomial basis $\{\psi_{i}(\boldsymbol X)\}_{i=1}^{P}$ based on the initial $P$ linearly independent polynomials $\left(e_{i}\right)_{i \in\{1,2,\ldots, P\}}$ as follows \cite{bjorck1992loss}:
\begin{algorithm}[H]
\caption{Modified Gram-Schmidt Algorithm}
\begin{algorithmic}[1] 
     \FOR{$i = 1, 2, \ldots, P $}
	\STATE $\phi_{i}(\boldsymbol X) \leftarrow e_{i}(\boldsymbol 	X)$
	\FOR{$k = 1, 2, \ldots, i-1 $}
\STATE{$\phi_{i}(\boldsymbol X) \leftarrow  \phi_{i}(\boldsymbol X) - \langle \phi_{i}(\boldsymbol X), \psi_{k}(\boldsymbol X) \rangle \psi_{k}(\boldsymbol X) $}
      	\ENDFOR
 	\STATE $\psi_{i}(\boldsymbol X)  \leftarrow\frac{\phi_{i}(\boldsymbol X)}{||\phi_{i}(\boldsymbol X)||_{2} }$
	\ENDFOR

\end{algorithmic}
\end{algorithm}
\noindent The inner product in the algorithm is defined with respect to the empirical measure in this paper. {\color{black}The inner product is numerically evaluated using the observations of $\boldsymbol X$ in a given dataset. Note that the proposed data-driven method assumes neither any distributional knowledge of $\boldsymbol X$ nor the ability to easily sample from its distribution. Thus, we do not use a Monte Carlo approach to evaluate the inner product although it may be an option for the problems that permit the sampling.}

The difference between the standard Gram-Schmidt algorithm and the modified Gram-Schmidt algorithm is at the line 4 in Algorithm 1, where the standard Gram-Schmidt algorithm performs 
\[
\phi_{i}(\boldsymbol X) \leftarrow  \phi_{i}(\boldsymbol X) - \left\langle e_{i}(\boldsymbol X), \psi_{k}(\boldsymbol X) \right\rangle \psi_{k}(\boldsymbol X).
\]

Note that different orthonormal polynomials are constructed from different permutations of the initial polynomials. In the following section, we discuss how to permute the order of the initial polynomials in order to obtain data-driven sensitivity indices for models with \textit{dependent} inputs.

\subsection{Sensitivity indices}
As we discussed in the previous section, PCE models can be constructed for models with \textit{dependent} inputs based on the modified Gram-Schmidt algorithm. In this section, we first propose how to use PCE models to estimate the \textit{full} sensitivity indices and the \textit{uncorrelated} sensitivity indices based on data. Then we define the conditional order-based sensitivity indices and present how they can be used to exclude higher order interaction terms in a PCE model. For easy reference, we include in Appendix {\color{black}A.1} a list of sensitivity index symbols used in this paper.

\subsubsection{Full sensitivity indices}\label{sec:full_SI}

Constructing orthonormal polynomials using the modified Gram-Schmidt algorithm requires a linearly independent set of polynomials. A PCE model with $n$ inputs 
and the highest polynomial order $p$ is composed of $P+1$ terms of polynomials as we defined in Eqs.~\eqref{eq:exp} and \eqref{eq:exp2}. Assume polynomials in the set 
\begin{equation}
\label{eq:poly}
S = \left\{ \prod_{l=1}^{n}X_{l}^{j_{l}} : 
j_{l} \in \{0,1,\ldots p\}, \sum_{l=1}^{n}j_{l} \leq p\right\} 
\end{equation}
are linearly independent. 

Orthonormal polynomials can be constructed using the modified Gram-Schmidt algorithm with respect to a specific order of the polynomials. 
Suppose we order the polynomials in $S$ as $(St_{0}, St_{11}, St_{1}{\color{black}\setminus }St_{11}, St_{2}, St_{3},\- \ldots, St_{n})$, where $St_{0} = \{1\}$, $St_{11}$ and $St_{i}$ are defined as follows:
\begin{equation}
\begin{aligned}
\label{eq:order1}
&St_{11} = \left\{X_{1}^{j_{1}}: j_{1} \in \{1,\ldots p\} \right\}, \\
&St_{i} = \left(S{\color{black} \setminus} \bigcup_{j=0}^{i-1} St_{j}\right) \bigcap 
\left \{\prod_{l=1}^{n}X_{l}^{j_{l}}: j_{i} \in \{1,2,\ldots, p\} , j_{l \neq i} \in \{0, 1,\ldots, p\}, \sum_{l=1}^{n}j_{l} \leq p \right\}. 
\end{aligned}
\end{equation}
%
$\left(St_{0}, St_{11}, St_{1} {\color{black} \setminus} St_{11}, St_{2}, St_{3},\- \ldots, St_{n}\right)$ is an ordered partition of the set $S$. In the partition, $St_{0}$, $St_{11}$, $St_{1}{\color{black} \setminus}St_{11}$, and $St_{i}, i = 2, 3, \ldots, n$ are ordered in sequence but the polynomials in each set can be in any arbitrary order. Note that $St_{1}$ contains all the polynomial functions of $X_{1}$ and the interaction terms between $X_{1}$ and the rest of the inputs. $St_{2}$ contains all the polynomial functions of $X_{2}$ and the interaction terms between $X_{2}$ and the rest of the inputs \emph{except} $X_{1}$. $St_{3}$ contains all the polynomial functions of $X_{3}$ and the interaction terms between $X_{3}$ and the rest of the inputs \emph{except} $X_{1}$ and $X_{2}$.

For example, $St_{11}$ and $St_{i}, i = 1,2, 3$ for constructing a PCE model with inputs $\left\{X_{1}, X_{2}, X_{3}\right\}$ and the highest polynomial order $p=3$ are defined as follows: 
\begin{equation}
\label{eq:lstexample}
\begin{aligned}
&St_{11} = \left\{X_{1}, X_{1}^{2}, X_{1}^{3} \right\}, \\
&St_{1} = \left\{X_{1}, X_{1}^{2}, X_{1}^{3}, X_{1}X_{2}, X_{1}^{2}X_{2}, X_{1}X_{2}^{2}, X_{1}X_{3}, X_{1}^{2}X_{3}, X_{1}X_{3}^{2}, X_{1}X_{2}X_{3} \right\}, \\
&St_{2} = \left\{X_{2}, X_{2}^{2}, X_{2}^{3}, X_{2}X_{3}, X_{2}^{2}X_{3}, X_{2}X_{3}^{2} \right\}, \\
&St_{3} =\left\{X_{3}, X_{3}^{2}, X_{3}^{3}\right\}. 
\end{aligned} \nonumber
\end{equation}
%

After constructing the orthonormal polynomials using the ordered partition $(St_{0}, St_{11}, St_{1} {\color{black}\setminus} St_{11}, St_{2}, St_{3},\- \ldots, St_{n})$, the first-order full sensitivity index $\bar{S}_{X_{1}}$ for $X_{1}$ can be estimated as follows:
\begin{equation}
\begin{aligned}
\label{eq:fulleq}
\bar{S}_{X_{1}} \approx \frac{\sum_{j \in St_{11}}\theta_{j}^{2}}{Var(Y)},\\
\end{aligned}
\end{equation}
where $\theta_{j}$'s are the PCE coefficients corresponding to the orthonormal polynomials in the set $St_{11}$. 

In addition, we propose an alternative total full sensitivity index $$\overbar{ST}_{X_{1}} = \frac{\sum_{u \ni X_{1}}D_{u}(Y)}{Var(Y)}$$
and estimate it using
\begin{equation}
\begin{aligned}
\label{eq:fulltot}
\overbar{ST}_{X_{1}} \approx \frac{\sum_{j \in St_{1}}\theta_{j}^{2}}{Var(Y)}.
\end{aligned}
\end{equation}
This total full sensitivity index is different from the one defined in \cite{mara:2012}, which is obtained after transforming the dependent inputs into the independent inputs. Instead, the total full sensitivity index in Eq.~\eqref{eq:fulltot} has dependent effects of $X_{1}$ with other inputs. By permuting the order of the input $(X_{1}, X_{2}, X_{3}, \ldots, X_{n})$ as $(X_{i}, X_{i+1}, \ldots, X_{n},X_{1},\- \ldots, X_{i-1})$, any $\bar{S}_{X_{i}}$ and $\overbar{ST}_{X_{i}}$ can be estimated.


We also define the conditional total sensitivity indices for models with \textit{dependent} inputs as follows: 
\begin{itemize}
\setlength\itemsep{-2em}
\item[] $\overbar{ST}_{X_{2}|X_{1}} = \frac{\sum_{u \ni \{X_{1} ,X_{2}\} }D_{u}(Y) - \sum_{u \ni X_{1}}D_{u}(Y) }{Var(Y)}$ is the total contribution of input $X_{2}$ to the variance of output $Y$ after taking account of the total full contribution of $X_{1}$. \\
\item[] $\overbar{ST}_{X_{3}|X_{1},X_{2}} = \frac{\sum_{u \ni \{X_{1} ,X_{2}, X_{3}\} }D_{u}(Y) - \sum_{u \ni \{X_{1}, X_{2}\}}D_{u}(Y) }{Var(Y)}$ is the total contribution  of input $X_{3}$ to the variance of output $ Y$ after taking account of the total full contributions of $X_{1}$ and $X_{2}$. \\
\item[] $\vdots$ \\
\item[] $\overbar{ST}_{X_{n}|X_{1},X_{2},\ldots,X_{n-1}} = \frac{\sum_{u \ni \{X_{1} ,X_{2}, \ldots, X_{n} \} }D_{u}(Y) - \sum_{u \ni \{X_{1}, X_{2}, \ldots, X_{n-1}\}}D_{u}(Y) }{Var(Y)}$ is the total contribution of input $X_{n}$ to the variance of output $ Y$ after taking account of the total full contributions of $X_{1}$, $X_{2}$, \ldots, $X_{n-1}$. 
\end{itemize}
We estimate the conditional total sensitivity indices using
$$\overbar{ST}_{X_{i}|X_{1},X_{2},\ldots,X_{i-1}} \approx \frac{\sum_{j \in St_{i}}\theta_{j}^{2}}{Var(Y)}$$
for $i=2,3,\ldots,n$

When inputs can be grouped such that inputs from different groups have neither dependence nor interaction across groups, the total Sobol index of a group can be estimated based on the conditional total sensitivity indices. For example, if the first $d$ inputs are independent of and have no interactions with the rest of the inputs, $\sum_{i=1}^{d} \overbar{ST}_{X_{i}}$ estimates the total Sobol index of the first $d$ inputs. Eq.~\eqref{eq:exeq1} for Example 3 in Section~\ref{sec:4} illustrates how this sensitivity index can be used in practice.

\subsubsection{Uncorrelated sensitivity indices}
In order to estimate the total \textit{uncorrelated} sensitivity index of $X_{1}$, we consider the ordered partition of $S$ as $(St_{0}, St_{-1}, St_{11}, St_{1}{\color{black}\setminus} St_{11})$, where $St_{0} = \{1\}$ and $St_{-1}= \bigcup_{j=2}^{n}St_{j}$. $St_{11}$ and $St_{i}, i=1,2,\ldots,n $ are defined in Eq.~\eqref{eq:order1}. Then the orthonormal polynomials can be constructed with respect to this ordered partition.   
As we obtain the PCE {\color{black}coefficients} corresponding to the orthonormal polynomials in the set $St_{1}$, the total \textit{uncorrelated} sensitivity index ($ST_{X_{1}}^{u}$) can be estimated using Eq.~\eqref{eq:fulltot}. 

In addition, we propose an alternative first-order \textit{uncorrelated} sensitivity index ($S_{X_{1}}^{u}$) and estimate it using Eq.~\eqref{eq:fulleq}. The proposed first-order \textit{uncorrelated} sensitivity index is different from the one defined in \cite{mara:2012} because the latter is estimated after decorrelating $X_{1}$ with all the other inputs. Note that the proposed first-order \textit{uncorrelated} sensitivity index is estimated by decorrelating the polynomials of the inputs. 
By permuting the order of the inputs  $(X_{1}, X_{2}, X_{3}, \ldots, X_{n})$ as $(X_{i}, X_{i+1}, \ldots, \- X_{n},X_{1}, \ldots, X_{i-1})$, any $S_{X_{i}}^{u}$ and $ST_{X_{i}}^{u}$ can be {\color{black}estimated}.

Note that the first-order full ({\color{black}resp.} uncorrelated) sensitivity indices are always smaller or equal to the total full ({\color{black}resp.} uncorrelated) sensitivity indices, but the first-order full ({\color{black}resp.} uncorrelated) sensitivity indices are not necessarily larger or smaller than the total uncorrelated ({\color{black}resp.} full) sensitivity indices.

\subsubsection{Conditional order-based sensitivity indices}
In order to reduce the complexity of a PCE model and select appropriate interaction terms in the PCE model, we propose the conditional order-based sensitivity indices. 

Suppose we order the polynomials in the polynomial set $S$ in Eq.~\eqref{eq:poly} as $\left(Sc_{0}, Sc_{11}, Sc_{12}, \ldots, Sc_{1p} , Sc_{22}, Sc_{23}, \ldots, Sc_{2p} , \ldots, Sc_{kk}, Sc_{kk+1}, \right. \\ \left. \ldots, Sc_{kp} \right)$, where $k = \min(n,p)$, $Sc_{0} =\{1\}$, and $Sc_{ij}, i= 1, 2, \ldots, k$; $j = i, i+1, \ldots, p$, are defined as follows: 
\begin{equation*}
\label{eq:order3}
Sc_{ij} = \left\{\prod_{l=1}^{n}X_{l}^{j_{l}}:  
j_{l} \in \{0, 1,\ldots p\}, \sum_{l=1}^{n}\mathbbm{1}_{j_{l} \neq 0} = i, \sum_{l=1}^{n}j_{l} =j \right\}, 
\end{equation*}
where 
\begin{equation*}
\label{eq:case1}
\mathbbm{1}_{j_{l} \neq 0} = \begin{cases}
1 & j_{l} \neq 0\\
0 & j_{l} = 0 
\end{cases}.
\end{equation*}

Define $Sc_{i}=\cup_{j=i}^{p}Sc_{ij}, i \leq \min(n, p)$, then $Sc_{1}$ contains all the polynomial functions of $X_{i}, i=1,2,\ldots, n $. $Sc_{2}$ contains all the two-way interaction terms. $Sc_{3}$ contains all the three-way interaction terms. Note that $\left(Sc_{0}, Sc_{11}, Sc_{12}, \ldots, Sc_{1p}, \right.\\ \left. Sc_{22}, Sc_{23}, \ldots, Sc_{2p}, \ldots, Sc_{kk}, Sc_{kk+1}, \ldots, Sc_{kp} \right)$, where $k = \min(n,p)$, is an ordered partition of the set $S$. In the partition, $Sc_{0}, Sc_{11}, Sc_{12}, \ldots, \-Sc_{1p}, Sc_{22},\- Sc_{23}, \ldots, Sc_{2p}, \ldots,\- Sc_{kk}, Sc_{kk+1}, \ldots, Sc_{kp}$, are ordered in sequence but the polynomials in each set $Sc_{ij}$ can be in any arbitrary order. 

For example, 
$Sc_{11}, Sc_{12}, Sc_{13}, Sc_{22}, Sc_{23}, \text{ and } Sc_{33}$ for constructing a PCE model with inputs $\left\{X_{1}, X_{2}, X_{3}\right\}$ and the highest polynomial order $p=3$ are defined as follows: 
\begin{equation}
\label{eq:lstexample2}
\begin{aligned}
&Sc_{11} = \left\{X_{1}, X_{2}, X_{3} \right\}, \\
&Sc_{12} = \left\{X_{1}^{2}, X_{2}^{2}, X_{3}^{2} \right\}, \\
&Sc_{13} = \left\{X_{1}^{3}, X_{2}^{3}, X_{3}^{3} \right\}, \\
&Sc_{22} = \left\{X_{1}X_{2}, X_{2}X_{3}, X_{1}X_{3} \right\}, \\
&Sc_{23} = \left\{X_{1}^{2}X_{2}, X_{1}^{2}X_{3}, X_{1}X_{2}^{2}, X_{1}X_{3}^{2}, X_{2}^{2}X_{3}, X_{2}X_{3}^{2},\right\}, \\
&Sc_{33} = \left\{X_{1}X_{2}X_{3} \right\}.  \\
\end{aligned}\nonumber
\end{equation}

We define the conditional order-based sensitivity indices as follows: 

\begin{itemize}
\setlength\itemsep{-2em}
\item[] $\tilde{S}_{1} = \frac{\sum_{i=1}^{n}D_{i}(Y)}{Var(Y)}$ is the first order sensitivity index of the output $Y$ with respect to the inputs $\boldsymbol X$.  \\
\item[] $\tilde{S}_{2|1}= \frac{\sum_{i<j}D_{ij}(Y)}{Var(Y)}$ is the second order sensitivity index of the output $Y$ with respect to the inputs $\boldsymbol X$ after taking account of the first order sensitivity index. \\
\item[] $\tilde{S}_{3|1,2}= \frac{\sum_{i<j<k}D_{ijk}(Y)}{Var(Y)}$ is the third order sensitivity index of the output $Y$ with respect to the inputs $\boldsymbol X$ after taking account of the first order sensitivity index and the second order sensitivity index. \\
\item[] $\vdots$ \\
\item[] $\tilde{S}_{k|1,2,\ldots,k-1}= \frac{D_{1,2,3,\ldots, k}(Y)}{Var(Y)}$ is the $k^{th}$ order sensitivity index of the output $Y$ with respect to the inputs $\boldsymbol X$ after taking account of the first order sensitivity index through the $(k-1)^{th}$ order sensitivity index. \\
\end{itemize}
As we can obtain the PCE coefficients corresponding to the orthonormal polynomials constructed from $\left(Sc_{0}, Sc_{11}, Sc_{12}, \ldots, Sc_{1p} , Sc_{22},\- Sc_{23}, \ldots, Sc_{2p} , \ldots,\right. \\ \left. Sc_{kk}, Sc_{kk+1}, \ldots, Sc_{kp} \right)$, where $k = \min(n,p)$, using the modified Gram-Schmidt algorithm, the above sensitivity indices can be estimated as follows: 

\begin{equation}
\begin{aligned}
\label{eq:Sob_cal3}
\tilde{S}_{1} &\approx \frac{\sum_{j \in Sc_{1}} \theta_{j}^{2}}{Var(\boldsymbol Y )}, \\
\tilde{S}_{i|1,\ldots,i-1} &\approx \frac{\sum_{j \in Sc_{i}} \theta_{j}^{2}}{Var(\boldsymbol Y )}, \,\,\, i=2,\ldots,\min(n,p),\\
\end{aligned}\nonumber
\end{equation}
%
where $\theta_{j}$'s are the PCE coefficients corresponding to the orthonormal polynomials in the set $Sc_{i}$ for $i \leq \min(n,p)$. Note that for a full PCE model, $Sc_{i}$ contains $\binom{n}{i} \binom{p}{i} $ polynomial terms.

{\color{black}The conditional order-based sensitivity indices serve the purpose of identifying up to which \emph{order of interaction} of inputs significantly influences the output variance. Specifically,}
if the cumulative sum of conditional order-based sensitivity indices, $\sum_{i=1}^{d}\tilde{S}_{i}$, is close to one for a certain polynomial order, $d\leq \min(n,p)$, it indicates that the interaction terms of orders higher than $d$ can be excluded in the PCE model. Using a simple procedure of inspecting the cumulative sum for different $d$'s, we can identify and remove unnecessary high-order interaction terms from the PCE model. In constrast to the existing methods of constructing a sparse PCE model \cite{blatman2008sparse,blatman2011adaptive, jakeman2015enhancing, peng2016polynomial}, this procedure keeps the effect hierarchy principle \cite{wu2011experiments} while improving the parsimony of the PCE model. 
Example 3 in Section~\ref{sec:4} illustrates how this procedure can be used to determine the highest polynomial order. 

{\color{black}While the conditional order-based sensitivity indices are useful for effective PCE modeling and, in turn, PCE-based sensitivity analyses, the new sensitivity indices do not directly serve the traditional purposes of sensitivity indices. In contrast, for example, the first-order full sensitivity indices and the total uncorrelated sensitivity indices directly help determine influential and non-influential inputs, respectively, in terms of their contributions to the output variance (also known as factor prioritization and factor fixing, respectively, in \cite{saltelli2008global}; see also \cite{mara:2012}).}

\section{Numerical examples}
\label{sec:4}
To validate the proposed data-driven sensitivity indices, this section presents {\color{black}four} examples where inputs are dependent. {\color{black}We present our experiment results based on 500 replications with 95\% confidence intervals wherever applicable. The confidence intervals are computed using $10,000$ bootstrap samples of the 500 replications to improve upon the accuracy of the empirical confidence interval computed from the 500 replications \cite{diciccio1996bootstrap}.}  

\subsection{Example 1}
We use a benchmark example in \cite{mara:2012} as our first validation case. In this case, inputs follow a three-dimensional multivariate normal distribution as follows:
\begin{equation}
\begin{aligned}
\begin{pmatrix}X_{1}\\
X_{2}\\
X_{3}
\end{pmatrix} &\sim \mathcal{N}
\begin{bmatrix}
\begin{pmatrix}
0\\
0\\
0
\end{pmatrix}\!\!,&
\begin{pmatrix}
1 & \rho_{12} & \rho_{13}\\
\rho_{12} & 1 & \rho_{23}\\
\rho_{13} & \rho_{23} & 1
\end{pmatrix}
\end{bmatrix}.\\[2\jot]
\end{aligned}\nonumber
\end{equation}
The output $Y$ is simply modeled using a linear model $Y=X_{1}+X_{2}+X_{3}$. 

\begin{table}[ht]
\centering
\caption{Sample mean of sensitivity indices {\color{black}and $95\%$ confidence intervals.}} 
\begin{adjustbox}{width=0.99\textwidth}
\label{tab:example_1}
\begin{tabular}{|c|c|cc|cc|}
\hline
\multirow{2}{*}{$(\rho_{12}, \rho_{13}, \rho_{23})$} & \multirow{2}{*}{Input} & \multicolumn{2}{c}{$\bar{S}_{X_{i}}$} & \multicolumn{2}{c|}{$ST^{u}_{X_{i}}$} \\ 
 \cline{3-6}
 &  & $\text{Analytical method}^\dagger$ & $\text{Proposed method}^\ddagger$ & $\text{Analytical method}^\dagger$ & $\text{Proposed method}^\ddagger$ \\
  \hline
(0.5,0.8,0) & $X_{1}$ & {\color{black}$0.945$} & {\color{black}$0.945$ $(0.945, 0.945)$} & {\color{black}$0.020$} & {\color{black}$0.020$ $(0.020, 0.020)$}   \\
 & $X_{2}$ & {\color{black}$0.402$} & {\color{black}$0.401$ $(0.400, 0.402)$}  & {\color{black}$0.055$} & {\color{black}$0.055$ $(0.054, 0.055)$} \\
 & $X_{3}$ & {\color{black}$0.579$} & {\color{black}$0.579$ $(0.578, 0.579)$} & {\color{black}$0.026$} & {\color{black}$0.026$ $(0.026,0.026)$}\\\hline
(-0.5,0.2,-0.7) & $X_{1}$ & {\color{black}$0.490$} & {\color{black}$0.491$ $(0.490,0.492)$} & {\color{black}$0.706$}& {\color{black}$0.707$ $(0.706,0.707)$} \\
 & $X_{2}$ & {\color{black}$0.040$} & {\color{black}$0.041$ $(0.040,0.041)$} & {\color{black}$0.375$} & {\color{black}$0.374$ $(0.373,0.374)$} \\
 & $X_{3}$ & {\color{black}$0.250$} & {\color{black}$0.250$ $(0.249,0.251)$} & {\color{black}$0.480$} & {\color{black}$0.480$ $(0.479,0.481)$} \\\hline
(-0.49,-0.49,-0.49) & $X_{1}$ &  {\color{black}$0.007$} & {\color{black}$0.007$ $(0.007, 0.007)$} &  {\color{black}$0.974$} & {\color{black}$0.974$ $(0.974, 0.974)$} \\
 & $X_{2}$ & {\color{black}$0.007$ } & {\color{black}$0.007$ $(0.006, 0.007)$} &  {\color{black}$0.974$} & {\color{black}$0.974$ $(0.974, 0.974)$} \\
 & $X_{3}$ &{\color{black}$0.007$} & {\color{black}$0.007$ $(0.007, 0.007)$} & {\color{black}$0.974$} & {\color{black}$0.974$ $(0.973, 0.974)$} \\ \hline
\end{tabular}
\end{adjustbox}
\begin{tablenotes}
      \small
      \item \textit{Note:} {$\dagger$The value presented in \cite{mara:2012} differs by up to 0.01 due to rounding. $\ddagger${\color{black}The  proposed  method  does  not  require  any  distributional  assumption.} The sample mean{\color{black}s of the sensitivity indices and $95\%$ bootstrap confidence intervals (using 10,000 bootstrap samples) are calculated} based on $500$ simulation replications where each replication uses $500$ random observations.  } 
    \end{tablenotes}
\end{table}


Table \ref{tab:example_1} shows the first-order full sensitivity index $\bar{S}_{X_{i}}$ and total uncorrelated sensitivity index $ST^{u}_{X_{i}}$ for each input based on the analytical method \citep{mara:2012}. These true indices are compared with the estimated  indices from the proposed method.  This example  validates that the proposed method can estimate the first-order full sensitivity index and total uncorrelated sensitivity index based only on data without the knowledge of the distribution of dependent inputs and the model structure. Note that in this example,  the \emph{total} full sensitivity index is the same as the \emph{first-order} full sensitivity index (i.e., $\overbar{ST}_{X_{i}} = \bar{S}_{X_{i}}$) and that the \emph{first-order} uncorrelated sensitivity index equals the \emph{total} uncorrelated sensitivity index (i.e., $S^{u}_{X_{i}}=ST^{u}_{X_{i}}$) because there is no interaction effect. Thus, $\overbar{ST}_{X_{i}}$ and $S^{u}_{X_{i}}$ are not presented.

\subsection{Example 2}
In this example from \cite{tarantola2017variance}, the output $Y$ is a non-linear function of four dependent inputs: $Y= X_{1}X_{2} + X_{3}X_{4}$. Here, $(X_{1},X_{2}) \in \big[0,1\big]^{2}$ is uniformly distributed within the triangle $X_{1}+X_{2}\leq 1$ and $(X_{3},X_{4}) \in \big[0,1\big]^{2}$ is uniformly distributed within the triangle $X_{1}+X_{2}\geq 1$. Due to the symmetry of the model, the sensitivity indices of $Y$ with respect to $X_{1}$ and $X_{3}$ are equal to those with respect to $X_{2}$ and $X_{4}$, respectively. 

\begin{table}[ht]
\centering
\caption{Sample mean of sensitivity indices {\color{black}and $95\%$ confidence intervals.}} 
\begin{adjustbox}{width=0.99\textwidth}
\label{tab:example_2}
\begin{tabular}{|c|cccccc|}
\hline
Method & $\bar{S}_{X_{1}}$ & $ST^{u}_{X_{2}}$ & $ST_{\{X_{1},X_{2}\}}$ & $\bar{S}_{X_{3}}$ & $ST^{u}_{X_{4}}$ & $ST_{\{X_{3},X_{4}\}}$ \\ \hline
$\text{Analytical method}^{\dagger}$ & $0.033$ & $0.067$ & $0.100$ & $0.233$ & $0.666$ & $0.900$ \\ \hline
\multirow{2}{*}{$\text{Benchmark method}^{\ddagger}$} & $0.032$ & $0.071$ & $0.103$ & $0.226$ & $0.669$ & $0.895$ \\
 & (0.028, 0.037) & (0.066,0.077) & (0.095,0.114) & (0.209,0.248) & (0.639,0.705) & (0.848, 953)\\
 \hline
\multirow{2}{*}{$\text{Proposed method}^*$} & $0.035$ & $0.066$ & $0.101$ & $0.233$ & $0.663$ & $0.896$ \\
 & (0.034, 0.036) & (0.066,0.067) & (0.100,0.103) & (0.231,0.235) & (0.661,0.666) & (0.892, 901)\\
 \hline
\end{tabular}
\end{adjustbox}
\begin{tablenotes}
      \small
      \item \textit{Note:} {$\dagger$The value is provided in \cite{tarantola2017variance}. $\ddagger$The value is estimated using the method proposed in \cite{tarantola2017variance} and the confidence interval is calculated based on 16,380 random observations under the assumption that the joint probability distribution of the inputs is \emph{known}. *The proposed method does not require any distributional assumption. The sample means of sensitivity indices and $95\%$ bootstrap confidence intervals (using $10,000$ bootstrap samples) are calculated based on 500 replications. In each replication, sensitivity indices are calculated using 500 random observations.   } 
    \end{tablenotes}
\end{table}



As shown in Table \ref{tab:example_2}, the proposed method yields the estimates that are close to the analytical values of $\bar{S}_{X_{i}}$ and $ST^{u}_{X_{i}}$. In contrast to the benchmark method in \cite{tarantola2017variance} that requires the knowledge of joint probability distribution of the inputs, the proposed method is purely data-driven. $ST_{\{X_{1},X_{2}\}}$ (or, $ST_{\{X_{3},X_{4}\}}$) is estimated using $\sum_{i=1}^{2}\overbar{ST}_{X_{i}}$ by permuting the inputs as $(X_{1}, X_{2}, X_{3}, X_{4})$ (or, $(X_{3}, X_{4}, X_{1}, X_{2})$).


Figure \ref{fig:exampel2} shows intricate working of the model by revealing how each input influences the output variance. The \textit{total} full (uncorrelated) sensitivity index $\overbar{ST}_{X_{i}}$ $(ST^{u}_{X_{i}})$ can be decomposed into the \textit{first-order} full (uncorrelated) sensitivity index $\bar{S}_{X_{i}}$ $(S^{u}_{X_{i}})$ and the \textit{rest of the total} effect, $\overbar{ST}_{X_{i}} - \bar{S}_{X_{i}}$ $(ST^{u}_{X_{i}} - S^{u}_{X_{i}} )$, which accounts for all the interactions of $X_{i}$. The gap between the two lines on the left (right) graph in Figure~\ref{fig:exampel2} shows the magnitude of $\overbar{ST}_{X_{i}} - \bar{S}_{X_{i}}$ $(ST^{u}_{X_{i}} - S^{u}_{X_{i}} )$, indicating how much of the total effect of $X_{i}$ is attributed to the interaction effects compared to the first-order effect when we consider the full (uncorrelated) contribution of $X_{i}$. 

\begin{figure}
\centering
\begin{subfigure}{0.48\textwidth}
\centering
\includegraphics[width = \textwidth]{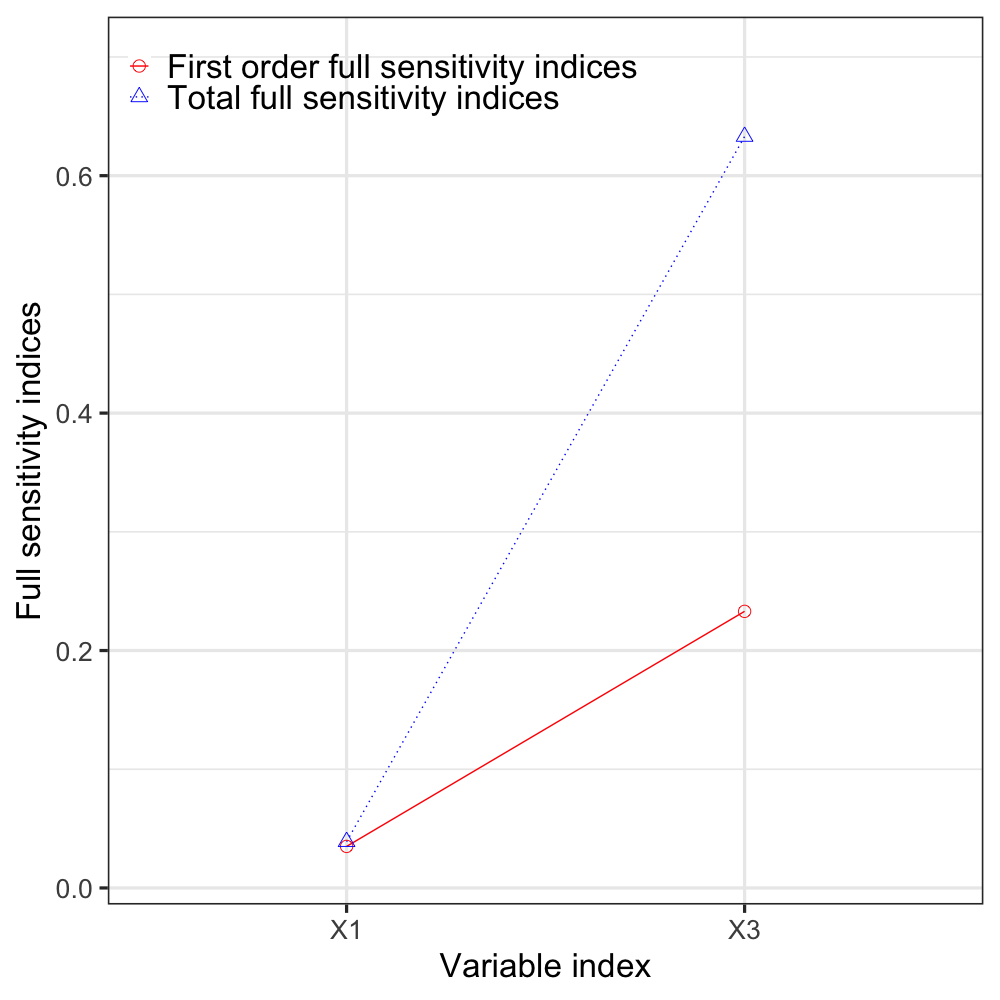}
\label{fig:left}
\end{subfigure}
\begin{subfigure}{0.48\textwidth}
\centering
\includegraphics[width = \textwidth]{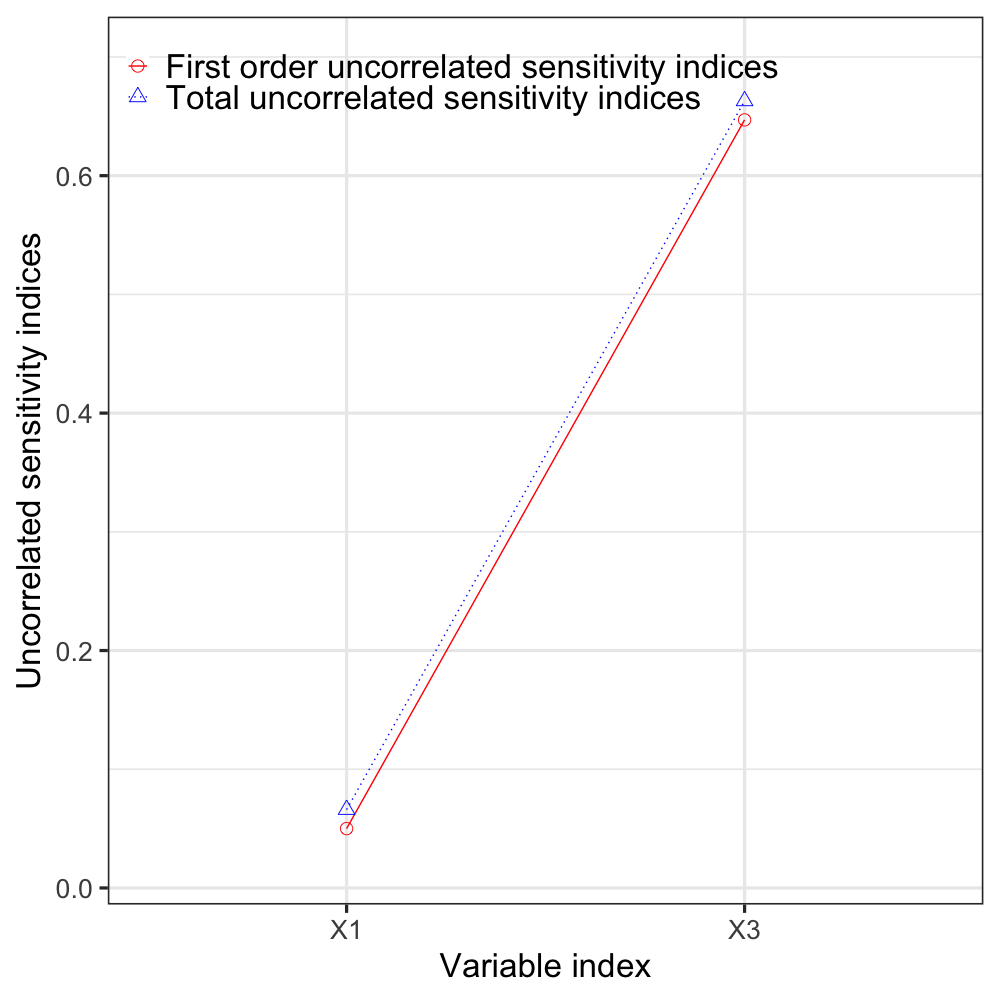}
\label{fig:right}
\end{subfigure}
\caption{The left-hand side graph shows the \textit{full} sensitivity indices for dependent input variables and the right-hand side graph shows the \textit{uncorrelated} sensitivity indices for dependent input variables in Example 2.  }
\label{fig:exampel2}
\end{figure}

\subsection{Example 3}
For the third example, we modified the example in \cite{jacques:2006} to have a more complex structure and involve multiple types of probability distributions as follows: 
\begin{equation}
\begin{aligned}
\label{eq:sim2formula}
&\begin{pmatrix}
X_{1}\\
X_{2}\\
X_{3}\\
X_{4}
\end{pmatrix} & \sim & \mathcal{N} \left[\left(\begin{array}{c}
0\\
0\\
0 \\
0
\end{array}\right),\left(\begin{array}{cccc}
1 & 0 & 0 & 0\\
0 & 1 & 0 & 0\\
0 & 0 & 1 & 0.3\\
0 & 0 & 0.3 & 1\\
\end{array}\right)\right], \\
& X &\sim &\mathcal{U}(0,\,1), \\ 
& X_{5} &= &\theta_{1}X + \mathcal{U}(0,\,1), \\
& X_{6} &= &\theta_{2}X + \theta_{3}X^{2} + \mathcal{U}(0,\,1), \\
&Y &=& X_{1}X_{2} + X_{3}X_{4} + X_{5}X_{6}.
\end{aligned}
\end{equation}
Here, $(X_{1}, X_{2}, X_{3}, X_{4})$ follows a multivariate Gaussian distribution with the parameters as above. The inputs $X_{5}$ and $X_{6}$ are dependent on each other, but their dependency cannot be explained by their first-order conditional moments. In this experiment, we set $(\theta_1,\theta_2,\theta_3) = (0.4, 0.6, 1)$ and obtain 10,000 random observations. 
Because the cumulative sum of the first two conditional order-based sensitivity indices is $\sum_{i=1}^{2}\tilde{S}_{i}=1$, we exclude third- and higher-order interaction terms in the PCE model to make it sparse. As for the two-way interaction terms, as shown in Figure \ref{fig:two-way}, most of the corresponding PCE coefficients are nearly zero except for the polynomial terms, $X_{1}X_{2}$, $X_{3}X_{4}$, and $X_{5}X_{6}$. It indicates that $X_{1}X_{2}$, $X_{3}X_{4}$, and $X_{5}X_{6}$ are the only interaction terms in the true model. Various sparse PCE approaches \cite{blatman2008sparse,blatman2011adaptive, jakeman2015enhancing, peng2016polynomial} can be additionally applied here to construct a sparser PCE model with only the important orthonormal polynomials of inputs.


\begin{figure}
\centering
\vspace*{-2.5cm}
\includegraphics[width = .8\textwidth]{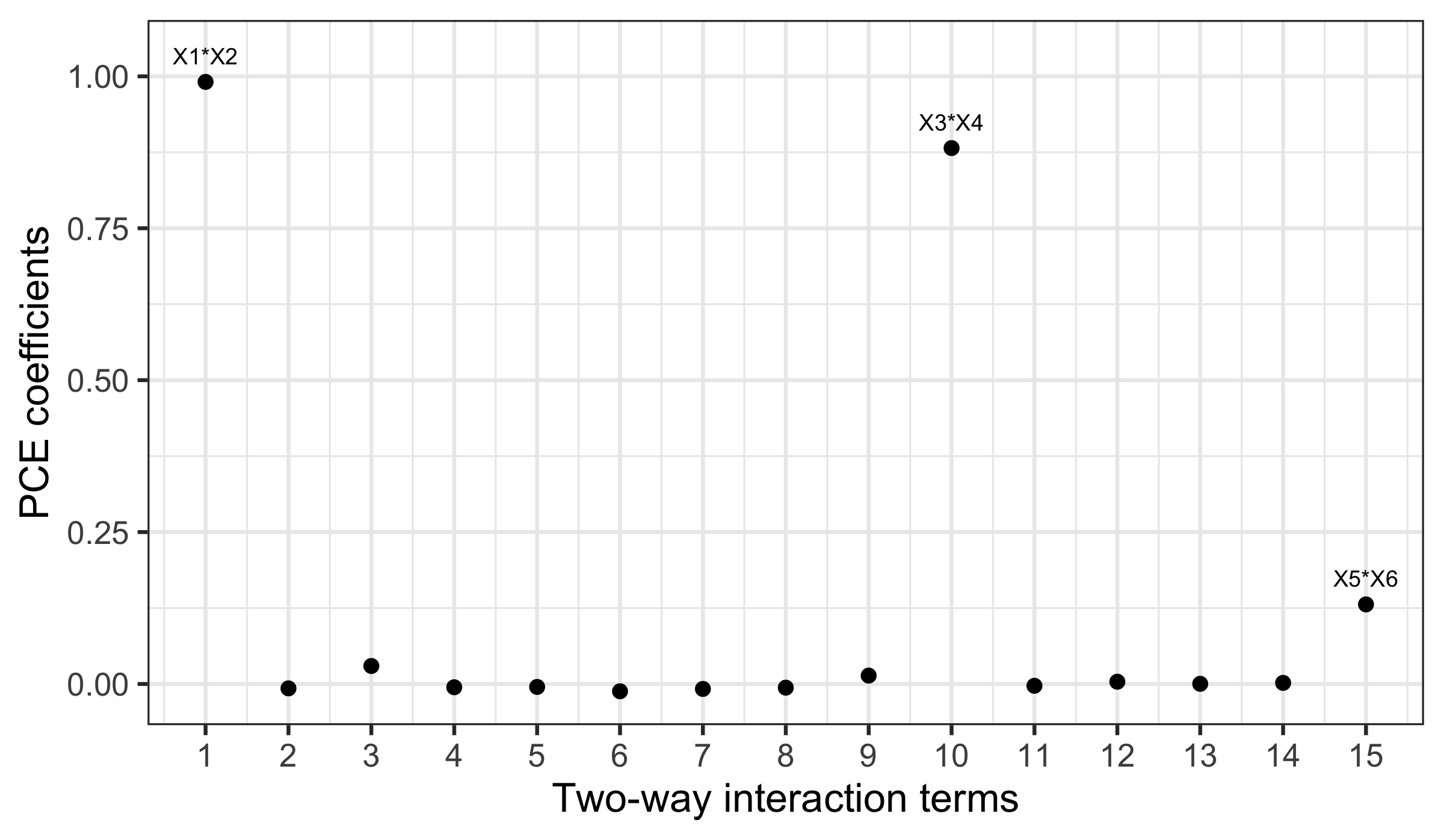}
\label{fig:right2}
\caption{PCE coefficients v.s. the two-way interaction terms in the PCE model in Example 3. The significant interaction terms, $X_{1}X_{2}$, $X_{3}X_{4}$, and $X_{5}X_{6}$, are identified.}
\label{fig:two-way}
\end{figure}


Because $\{X_{1}, X_{2}\}$, $\{X_{3}, X_{4}\}$, and $\{X_{5}, X_{6}\}$ are mutually independent, we can infer from the conditional order-based sensitivity indices that the output $Y$ is composed of three non-interacting functions $f_{12}(X_{1}, X_{2})$, $ f_{34}(X_{3}, X_{4})$, and $f_{56}(X_{5}, X_{6})$. Thanks to this special structure, we can directly calculate the total sensitivity indices for $\{X_{1}, X_{2}\}$, $\{X_{3}, X_{4}\}$, and $\{X_{5}, X_{6}\}$. Without permuting the order of the input variables, the total Sobol indices can be calculated as follows: 
\begin{equation}
\begin{aligned}
\label{eq:exeq1}
ST_{\left\{X_{1}, X_{2}\right\}} = \overbar{ST}_{X_{1}} + \overbar{ST}_{X_{2}}, \\
ST_{\left\{X_{3}, X_{4}\right\}} = \overbar{ST}_{X_{3}} + \overbar{ST}_{X_{4}}, \\
ST_{\left\{X_{5}, X_{6}\right\}} = \overbar{ST}_{X_{5}} + \overbar{ST}_{X_{6}}, \\
\end{aligned}
\end{equation}
where $\overbar{ST}_{X_{i}}$ is the conditional \textit{total full} sensitivity index defined in Section~\ref{sec:full_SI}.  

We validate the sensitivity indices from the proposed method with the values from an analytical method (see Appendix {\color{black}A.2}). We also calculate $ST_{\{X_{1}, X_{2}\}}$ and $ST_{\{X_{3}, X_{4}\}}$ using the benchmark method in \cite{mara:2012} assuming the knowledge that $\left(X_{1}, X_{2}, X_{3}, X_{4}\right)$ is multivariate Gaussian distributed and $\{X_{1}, X_{2}\}$, $\{X_{3}, X_{4}\}$, and $\{X_{5}, X_{6}\}$ are mutually independent. Then, $ST_{\{X_{5}, X_{6}\}}$ is calculated based on the fact that $ST_{\{X_{1}, X_{2}\}} + ST_{\{X_{3}, X_{4}\}} +ST_{\{X_{5}, X_{6}\}} = 1$.

\begin{table}[ht]
\centering
\caption{Sample means of sensitivity indices and $95\%$ confidence intervals.} 
\label{tab:sens-sim2}
\small
\begin{tabular}{|  c| c c c |  }
 \hline
\backslashbox{Method}{Input set} & $ST_{\{X_{1}, X_{2}\}}$  &$ST_{\{X_{3}, X_{4}\}}$& $ST_{\{X_{5}, X_{6}\}}$  \\
 \hline
Analytical method & {\color{black}0.402}&{\color{black}0.438}& {\color{black}0.160}\\
Proposed method$^*$ & {\color{black}0.402} &{\color{black} 0.438} & {\color{black}0.160}\\ 
& {\color{black}(0.401, 0.404)} &{\color{black} (0.437, 0.439)} & {\color{black}(0.159, 0.160)}\\
Benchmark method\cite{mara:2012} &{\color{black}$0.403^{\dagger}$}&{\color{black}$0.439^{\dagger}$}&{\color{black}$(0.158)^{\ddagger}$}\\
& {\color{black}(0.402, 0.404)}&{\color{black}(0.438, 0.440)}& {\color{black}(0.157, 0.160)}\\
\hline
\end{tabular}
\begin{tablenotes}
      \small
      \item \textit{Note:} {$\dagger$The value is obtained using the sample variance to estimate $Var(Y)$ in the denominator of the sensitivity index instead of using the PCE coefficients from the benchmark method (see Eq.~\eqref{eq:mom}) because the latter estimation suffers a non-negligible bias in this example that does not satisfy the assumption of the benchmark method. $\ddagger$The value cannot be obtained directly from the benchmark method, but we calculate the value based on the assumption that the user knows that $X_{5}$ and $X_{6}$ are independent of the rest of the inputs and that $\left(X_{1}, X_{2}, X_{3}, X_{4}\right)$ follows a multivariate Gaussian distribution. *The proposed method does not require any assumption. The sample means of sensitivity indices and $95\%$ bootstrap confidence intervals 
(using 10,000 bootstrap samples) are calculated based on 500 replications. In each replication, sensitivity indices are calculated using 5,000 random observations of the inputs and output in {\color{black}Eq.~}(\ref{eq:sim2formula}).} 
    \end{tablenotes}
\end{table}

As shown in Table \ref{tab:sens-sim2}, the sensitivity indices from our method are close to . those from the benchmark method and the analytical values. Note that, in contrast to the benchmark method, the proposed method is a data-driven approach that does not impose any assumption on the inputs. 

\subsection{Example 4}
{
\color{black}
As the fourth example, we consider the 23-bar horizontal truss example in \cite{torre2019data}. The output of interest, $Y$, is a downward vertical displacement at the mid span of the structure subject to random loads. As depicted in Figure \ref{fig:truss}, the uncertainty of $Y$ depends on the ten random inputs in $\boldsymbol{X}=\left(E_1, E_2, A_1, A_2, P_1, \ldots, P_6\right)$: namely, uncertain Young modulus $E_{i}, i=1,2,$ and uncertain cross-sectional area $A_{i}, i=1,2,$ for two different groups of bars (horizontal for $i=1$ and diagonal for $i=2$), and the random loads $P_{i}, i=1,2,\cdots,6$.  The inputs $E_{i}, i=1,2,$ and $A_{i}, i=1,2,$ are assumed to be mutually independent and follow the following distributions:
\begin{equation}
    \begin{aligned}
    E_{1}, E_{2} \sim \mathcal{LN}(2.1 \times 10^{11}, 2.1\times 10^{10}) \; \textrm{[Pa]},\\
    A_{1} \sim \mathcal{LN}(2.0 \times 10^{-3}, 2.0 \times 10^{-4}) \; [\textrm{m}^2],\\
    A_{2} \sim \mathcal{LN}(1.0 \times 10^{-3}, 1.0 \times 10^{-4}) \;  [\textrm{m}^2],
    \end{aligned} \nonumber
\end{equation}
where $\mathcal{LN}(\mu,\sigma)$ denotes the lognormal distribution with mean $\mu$ and standard deviation $\sigma$.

The dependent inputs $P_{i}, i=1,2,\cdots,6,$ have the following Gumbel marginal distribution function with mean $\mu = 5\times 10^{4} \;\textrm{[N]}$ and standard deviation $\sigma = 7.5\times 10^{3} \;\textrm{[N]}$:
\begin{equation}
\begin{aligned}
 F_{i}(x;\alpha, \beta) = e^{-e^{-(x-\alpha)}/\beta}, i =1,2,\ldots,6,
\end{aligned}\nonumber
\end{equation}
where $\beta = \sqrt{6}\sigma/\pi$, $\alpha = \mu - \gamma \beta$, and $\gamma \approx 0.5772$ is the Euler-Masch{\color{black}e}roni constant. The dependence between the inputs $P_{i}, i=1,\ldots,6$, is encoded in the C-vine copula with the following density:
\begin{equation}
\label{eq:coupula}
c_{\boldsymbol{X}}^{(\mathcal{G})}(u_{1},\ldots,u_{6}) = \prod_{j=2}^{6}c_{1j;\theta=1.1}^{(\mathcal{G}\mathcal{H})}(u_{1},u_{j}),    
\end{equation}
where $c_{1j;\theta=1.1}^{(\mathcal{G}\mathcal{H})}$ is the density of the pair-copula between $P_{1}$ and $P_{j}$, $j=2,\ldots,6$. $\mathcal{G}\mathcal{H}$ represents the Gumbel-Hougaard family whose bivariate copula is  $$C^{(\mathcal{G}\mathcal{H})}_\theta(u,v) =\exp\left(-\left( \left(-\log u\right)^\theta + \left(- \log v\right)^\theta \right)^{1/\theta}\right), \quad \theta \in \left[1, \infty\right).$$ The parameter $\boldsymbol{\theta}$ decides the dependence between two loads (i.e., the larger parameter $\boldsymbol{\theta}$ the stronger dependence). 
In this case, $P_{1}$ is equally and positively correlated with each of $P_{2},\ldots,P_{6}$. Thus, $P_{2},\ldots,P_{6}$ are positively correlated with each other although they are conditionally independent given $P_{1}$ (see a sample correlation matrix for the loads in Appendix A.3). The output $Y$ is simulated using the response surface model (see Appendix A.4) in \cite{lee2006response}, which was constructed based on a finite element analysis. The explicit relationship between $Y$ and $\boldsymbol{X}$ in the response surface model allows us to evaluate the estimated sensitivity indices.

\begin{figure}
\centering
\vspace*{-2.5cm}
\includegraphics[width = .85\textwidth]{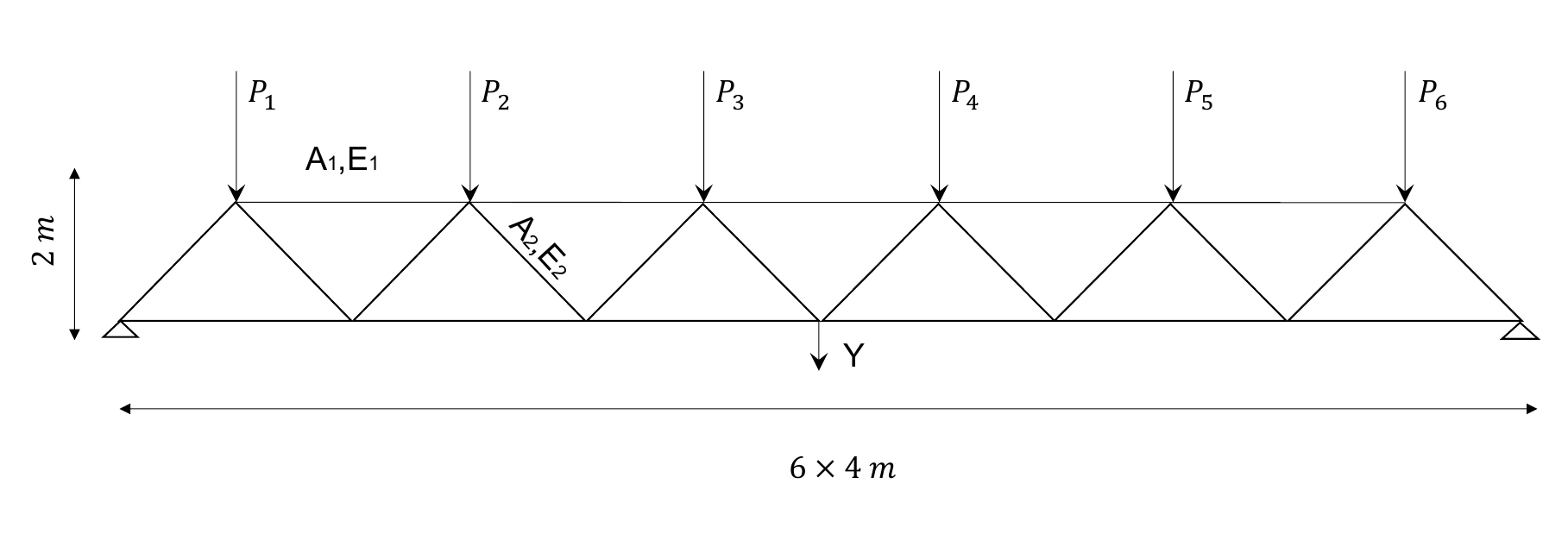}
\caption{{\color{black}Scheme of the horizontal truss model modified from \cite{lee2006response}. The downward vertical displacement at the mid span of the structure $Y$ depends on Young modulus $E_{i}, i=1,2$, cross-sectional area $A_{i}, i=1,2$ for both horizontal and diagonal bars, and the random loads $P_{i}, i=1,2,\cdots,6$.}}
\label{fig:truss}
\end{figure}

This realistic problem {\color{black}with dependent inputs} has neither analytically known sensitivity indices nor any benchmark methods that attempted to estimate the sensitivity indices {\color{black}(see \cite{blatman2011adaptive} for a related sensitivity analysis with \textit{independent} inputs)}. Implementing a brute-force Monte Carlo approach is computationally challenging, if not infeasible, because the analytical expressions of the sensitivity indices involve variances of conditional expectations that condition on (multiple combinations of) multiple inputs. A similar challenge lies in even estimating Sobol indices for \emph{independent} inputs and is studied extensively in the literature \cite{myshetskaya2008monte,tissot2015randomized,owen2013better}. No Monte Carlo method has satisfactorily addressed the computational challenge yet. Extending the existing Monte Carlo methods for independent inputs to handle dependent inputs is left for future work. 

In this study, we examine the estimated sensitivity indices to confirm that they agree with their expected physical interpretations.
As it is shown in Table \ref{tab:sens-sim4}, $P_2$ (resp. $P_{3}$) and $P_{5}$ (resp. $P_{4}$) have almost the same sensitivity indices for $\bar{S}$, $\overbar{ST}$, $S^{u}$, and $ST^{u}$. This can be explained by a) the physical symmetry between $P_2$ (resp. $P_{3}$) and $P_{5}$ (resp. $P_{4}$) with respect to the location at which the output $Y$ is measured (see Figure \ref{fig:truss} and the response surface model in Appendix A.4) and b) their symmetric correlations with other inputs (recall the copula density in Eq.~\eqref{eq:coupula} and see Appendix A.3). 
On the other hand, the differences between the \emph{full} sensitivity indices (i.e., $\bar{S}$ and $\overbar{ST}$) for $P_{1}$ and $P_{6}$ can be explained by the fact that $P_{1}$ has over 5 times stronger correlations than $P_{6}$ with $P_{2},\ldots, P_{5}$ (see Appendix A.3). In contrast, the \emph{uncorrelated} sensitivity indices (i.e., $S^{u}$ and $ST^{u}$) for $P_{1}$ and $P_{6}$ are the same (up to 4 decimal places) because the uncorrelated effects of $P_{1}$ and $P_{6}$ on $Y$ should be very similar (see the coefficients of the response surface model in Appendix A.4.). The sensitivity indices for all the other inputs are similarly confirmed to be consistent with their expected physical interpretations based on the response surface model, which reflects the physical relationship between $Y$ and $\boldsymbol{X}$, and the dependence structure of $\boldsymbol{X}$. 

{\color{black}In addition, although not directly comparable due to different settings, still the first and total uncorrelated sensitivity indices (i.e., $S^{u}$ and $ST^{u}$) have similar magnitudes as the first and total Sobol indices (i.e., $S$ and $ST$) reported in Table 5 of \cite{le2017metamodel} and Table 2 of \cite{blatman2011adaptive}, respectively, for all ten inputs. Both articles \cite{le2017metamodel,blatman2011adaptive} assumed that $P_1$ through $P_6$ are \textit{independent}, and directly computed $Y$ using a finite element model.}



\begin{table}[]
\centering
\caption{{\color{black}Sample means of sensitivity indices and $95\%$ confidence intervals.}}
\label{tab:sens-sim4}
\begin{adjustbox}{width=0.98\textwidth}
\small
\begin{tabular}{|>{\color{black}}c|>{\color{black}}c>{\color{black}}c>{\color{black}}c>{\color{black}}c|}
\hline
\backslashbox{Input}{Sensitivity indices}       & $\bar{S}$                     & $\overbar{ST}$                   & $S^{u}$              & $ST^{u}$             \\ \hline
$E_{1}$ & 0.324  (0.321, 0.327) & 0.371 (0.367, 0.374) & 0.286 (0.284, 0.288) & 0.312 (0.310, 0.314) \\  \hline
$E_{2}$ & 0.013 (0.012, 0.014)  & 0.036 (0.035, 0.037) & 0.009 (0.008, 0.009) & 0.009 (0.008, 0.009) \\  \hline
$A_{1}$ & 0.325 (0.322, 0.328)  & 0.370 (0.367, 0.374) & 0.285 (0.283, 0.287) & 0.310 (0.308, 0.312) \\  \hline
$A_{2}$ & 0.013 (0.012, 0.014)  & 0.037 (0.036, 0.038) & 0.008 (0.008, 0.008) & 0.008 (0.008, 0.008) \\  \hline
$P_{1}$ & 0.065 (0.063, 0.068)  & 0.096 (0.093, 0.099) & 0.004 (0.004, 0.004) & 0.004 (0.004, 0.004) \\   \hline
$P_{2}$ & 0.060 (0.058, 0.062)  & 0.088 (0.085, 0.090) & 0.033 (0.033, 0.033) & 0.035 (0.035, 0.036) \\   \hline
$P_{3}$ & 0.105 (0.103, 0.108)  & 0.135 (0.132, 0.138) & 0.068 (0.067, 0.068) & 0.073 (0.072, 0.073) \\    \hline
$P_{4}$ & 0.102 (0.010, 0.105)  & 0.130 (0.128, 0.133) & 0.068 (0.067, 0.068) & 0.073 (0.072, 0.073) \\   \hline
$P_{5}$ & 0.057 (0.055, 0.059)  & 0.084 (0.082, 0.087) & 0.033 (0.033, 0.033) & 0.035 (0.035, 0.036) \\   \hline
$P_{6}$ & 0.017 (0.016, 0.018)  & 0.043 (0.042, 0.045) & 0.004 (0.004, 0.004) & 0.004 (0.004, 0.004) \\  \hline
\end{tabular}
\end{adjustbox}
\begin{tablenotes}
      \small
      \item {\color{black}\textit{Note:} {The sample means of sensitivity indices and $95\%$ bootstrap confidence intervals (using $10,000$ bootstrap samples) are calculated based on $500$ replications. In each replication, sensitivity indices are calculated using $500$ random observations of the inputs and outputs in Eq.~\eqref{eq:truss_out}.} } 
    \end{tablenotes}
\end{table}

}

\section{Conclusion}
\label{sec:5}
In this paper, data-driven sensitivity indices for a model with \textit{dependent} inputs are proposed using the PCE without imposing any strong assumptions on the model inputs. The modified Gram-Schmidt algorithm with the empirical measure is utilized to construct orthonormal polynomials for a PCE model on the merit of numerical stability. The proposed data-driven method {\color{black}yields} the \textit{full} sensitivity indices and the \textit{uncorrelated} sensitivity indices by constructing ordered partitions of orthonormal polynomials of inputs for a PCE model. The proposed conditional order-based sensitivity indices for a model with \textit{dependent} inputs help reduce the complexity of a PCE model while keeping the effect hierarchy principle. {\color{black}Four} numerical examples  validate the proposed method. 

The proposed method requires polynomials of inputs, which are fed into the modified Gram-Schmidt algorithm, to be linearly independent. This suggests a future research direction because there are multiple ways of constructing linearly independent polynomials from a linearly dependent polynomial basis. How to build a theoretically and practically desirable basis warrants more investigation.   

\section*{Appendix} 

\subsection*{A.1. List of sensitivity indices}
\glsaddall
\newglossaryentry{aaaa}{
  name = {$S_{X_{i}}$} , 
  description = {First-order Sobol index of the output $Y$ with respect to the input $X_{i}$}
}
\newglossaryentry{bbbb}{
  name = {$ST_{X_{i}}$},
  description = {Total Sobol index of the output $Y$ with respect to the input $X_{i}$}
}

\newglossaryentry{cccc}{
  name = {$\bar{S}_{X_{i}}$}, 
  description = {First-order \textit{full} sensitivity index of the output $Y$ with respect to the input $X_{i}$}
}

\newglossaryentry{dddd}{
  name = {$\overbar{ST}_{X_{i}}$}, 
  description = {Total \textit{full} sensitivity index of the output $Y$ with respect to the input $X_{i}$}
}

\newglossaryentry{eeee}{
  name = {$S^{u}_{X_{i}}$}, 
  description = {First-order \textit{uncorrelated} sensitivity index of the output $Y$ with respect to the input $X_{i}$}
}
\newglossaryentry{ffff}{
  name = {$ST^{u}_{X_{i}}$},
  description = {Total \textit{uncorrelated} sensitivity index of the output $Y$ with respect to the input $X_{i}$}
}

\newglossaryentry{gggg}{
  name = {$\tilde{S}_{k\vert 1,2,\ldots,k-1}$},
  description = {The $k^{th}$ order sensitivity index of the output $Y$ with respect to the inputs $\boldsymbol{X}$ after taking account of the first order sensitivity index through the $(k-1)^{th}$ order sensitivity index}
}
\printglossaries
\subsection*{A.2. Analytical method for calculating the sensitivity indices in Example 3 in Section~\ref{sec:4}} 
The following lemma is used for the analytical method in Example 3 in Section~\ref{sec:4}. 
\begin{lem}
\label{lem1}
Suppose 
\begin{equation*}
\begin{aligned}
&\begin{pmatrix}
X_{1}\\
X_{2}\\
\end{pmatrix} & \sim & \mathcal{N} \left[\left(\begin{array}{c}
\mu_{1}\\
\mu_{2}\\
\end{array}\right),\left(\begin{array}{cc}
\sigma_{1}^{2} & \rho\sigma_{1}\sigma_{2}  \\
\rho\sigma_{1}\sigma_{2} & \sigma_{2}^{2} \\
\end{array}\right)\right]. \\
\end{aligned}
\end{equation*}
\normalfont Then, $Var(X_{1}X_{2}) = \mu_{1}^{2}\sigma_{2}^{2}  + \mu_{2}^{2}\sigma_{1}^{2} + \sigma_{1}^{2}\sigma_{2}^{2} + 2 \mu_{1}\mu_{2}\rho\sigma_{1}\sigma_{2} + \rho^{2}\sigma_{1}^{2}\sigma_{2}^{2}$. 
\end{lem}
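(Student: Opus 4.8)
The plan is to compute $\operatorname{Var}(X_1 X_2) = E[(X_1 X_2)^2] - (E[X_1 X_2])^2$ by expressing $X_1$ and $X_2$ as linear functions of independent standard normals and expanding. First I would write $X_1 = \mu_1 + \sigma_1 Z_1$ and $X_2 = \mu_2 + \sigma_2(\rho Z_1 + \sqrt{1-\rho^2}\, Z_2)$, where $Z_1, Z_2$ are independent $\mathcal{N}(0,1)$ variables; this reproduces the prescribed means, variances, and covariance $\rho\sigma_1\sigma_2$. Then $E[X_1 X_2] = \mu_1\mu_2 + \rho\sigma_1\sigma_2$ follows immediately from $E[Z_i]=0$, $E[Z_1^2]=1$, $E[Z_1 Z_2]=0$.

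Next I would compute $E[(X_1 X_2)^2]$. One clean way is to use the standard formula for the expectation of a product of jointly Gaussian variables (Isserlis/Wick): writing $A = X_1$, $B = X_2$, we have $E[A B A B] = E[A B]^2 + 2\,E[A^2 - (EA)^2 \text{-type terms}]$; more concretely, for (not necessarily centered) jointly Gaussian $X_1, X_2$ one has
\begin{equation*}
E[X_1^2 X_2^2] = E[X_1^2]E[X_2^2] + 2\big(E[X_1 X_2]\big)^2 - 2\big(E[X_1]E[X_2]\big)^2 + \text{(cross terms that cancel)},
\end{equation*}
but rather than chase the mixed-moment bookkeeping abstractly, I would just substitute the $Z_i$ representation into $(X_1 X_2)^2$, expand the resulting polynomial in $Z_1, Z_2$, and take expectations term by term using $E[Z_i]=E[Z_i^3]=0$, $E[Z_i^2]=1$, $E[Z_i^4]=3$, and independence. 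Collecting terms and subtracting $(\mu_1\mu_2 + \rho\sigma_1\sigma_2)^2$ should yield exactly
\begin{equation*}
\operatorname{Var}(X_1 X_2) = \mu_1^2\sigma_2^2 + \mu_2^2\sigma_1^2 + \sigma_1^2\sigma_2^2 + 2\mu_1\mu_2\rho\sigma_1\sigma_2 + \rho^2\sigma_1^2\sigma_2^2.
\end{equation*}

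The main obstacle is purely the bookkeeping: the expansion of $(\mu_1+\sigma_1 Z_1)^2(\mu_2 + \sigma_2\rho Z_1 + \sigma_2\sqrt{1-\rho^2}Z_2)^2$ produces on the order of a dozen monomials in $Z_1, Z_2$, and one must carefully track which ones survive expectation (the odd-power-in-$Z_2$ terms vanish, several odd-power-in-$Z_1$ terms vanish, and the $Z_1^4$ term contributes a factor of $3$ that partially cancels against the $-(\rho\sigma_1\sigma_2)^2$ coming from the square of the mean). There is no conceptual difficulty — it is a finite Gaussian moment computation — so the proof amounts to organizing this expansion cleanly, perhaps by first handling the centered case ($\mu_1=\mu_2=0$) to isolate the $\sigma_1^2\sigma_2^2 + \rho^2\sigma_1^2\sigma_2^2$ part, then adding back the mean-dependent contributions.
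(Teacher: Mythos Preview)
Your approach is correct and follows the same overall strategy as the paper---represent $X_1,X_2$ as affine functions of independent standard normals and compute moments---but the paper chooses a different linear representation. Instead of your Cholesky form $X_2=\mu_2+\sigma_2(\rho Z_1+\sqrt{1-\rho^2}\,Z_2)$, the paper writes $X_i=\mu_i+r\sigma_i Z+(1-r^2)^{1/2}\sigma_i Y_i$ with $r=\sqrt{\rho}$ and $Z,Y_1,Y_2$ independent $\mathcal{N}(0,1)$. This symmetric factor-model representation lets them organize the computation as $E[X_1^2X_2^2]=E[X_1^2]E[X_2^2]+\operatorname{Cov}(X_1^2,X_2^2)$, where only the shared $Z$-terms contribute to the covariance, so the bookkeeping is slightly lighter than the full expansion you propose. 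On the other hand, your Cholesky representation is valid for all $\rho\in(-1,1)$, whereas the paper's $r=\sqrt{\rho}$ tacitly assumes $\rho\ge 0$; your route is thus marginally more general, at the cost of a few more cross terms to track.
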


\begin{proof} 
We can express $X_{1}$ and $X_{2}$ as follows: %
\begin{equation*}
\begin{aligned}
X_{1} = \mu_{1} + r\sigma_{1}Z + (1-r^{2})^{\frac{1}{2}}\sigma_{1}Y_{1}, \\ 
X_{2} = \mu_{2} + r\sigma_{2}Z + (1-r^{2})^{\frac{1}{2}}\sigma_{2}Y_{2}, \\ 
\end{aligned}
\end{equation*}
where $r = \sqrt{\rho}$, $Z \sim \mathcal{N}(0,1)$, and $Y_{i} \sim \mathcal{N}(0,1), i = 1,2$. We have the covariance among $Z$ and $Y_{i}, i = 1,2$ as follows: 
\begin{equation*}
\begin{aligned}
Cov(Z, Y_{i}) &=0, \text{ for } i = 1,2 ,\\
Cov(Y_{1}, Y_{2}) &= 0. 
\end{aligned}\nonumber
\end{equation*}
Therefore, we have
\begin{equation*}
\begin{aligned}
\label{eq:prooflem3}
Var(X_{1}X_{2}) &= E(X_{1}^{2}X_{2}^{2}) - [E(X_{1}X_{2})]^{2} \\
&= E(X_{1}^{2})E(X_{2}^{2}) + Cov(X_{1}^{2}, X_{2}^{2}) - [E(X_{1})E(X_{2}) + Cov(X_{1}, X_{2})]^{2} \\ 
&= (\sigma_{1}^{2} + \mu_{1}^{2})(\sigma_{2}^{2} + \mu_{2}^{2}) + Cov(r^{2}\sigma_{1}^{2}Z^{2} + 2\mu_{1}r\sigma_{1}Z, r^{2}\sigma_{2}^{2}Z^{2} + 2\mu_{2}r\sigma_{2}Z)- \\
& \quad  (\mu_{1}\mu_{2} + r^{2}\sigma_{1}\sigma_{2})^{2} \\
&= \mu_{1}^{2}\sigma_{2}^{2} + \mu_{2}^{2}\sigma_{1}^{2} + \sigma_{1}^{2}\sigma_{2}^{2} + 2\mu_{1}\mu_{2}\rho\sigma_{1}\sigma_{2} + \rho^{2}\sigma_{1}^{2}\sigma_{2}^{2}.
\end{aligned}
\end{equation*}
\end{proof}
\noindent

We now present how to analytically calculate the sensitivity indices in Example 3. 
From (\ref{eq:sim2formula}), $\{X_{1}, X_{2}\}$, $\{X_{3}, X_{4}\}$, and $\{X_{5}, X_{6}\}$ are mutually independent. We have 
\begin{equation*}
\begin{aligned}
\label{eq:appen2}
Var(Y) &= Var(X_{1}X_{2} + X_{3}X_{4} + X_{5}X_{6}) \\
&= Var(X_{1}X_{2}) + Var(X_{3}X_{4}) + Var(X_{5}X_{6}).
\end{aligned}
\end{equation*}

Based on Lemma \ref{lem1}, we can easily obtain $Var(X_{1}X_{2})$ and $Var(X_{3}X_{4})$. As for $Var(X_{5}X_{6})$, $X_{5}$ and $X_{6}$ can be expressed as follows: 
\begin{equation*}
\begin{aligned}
\label{eq:appendfin1}
X_{5} &= \theta_{1}U_{1} + U_{2}, \\
X_{6} &= \theta_{2}U_{1} + \theta_{3}U_{1}^{2} + U_{3}, 
\end{aligned}
\end{equation*}
where $U_{i} \sim U(0, 1), i=1,2,3$ and $U_{i}$'s are mutually independent for $i = 1,2,3$. Because
\begin{equation*}
\begin{aligned}
\label{eq:appendfin2}
Var(X_{5}X_{6}) &= E(X_{5}^{2}X_{6}^{2}) - [E(X_{5}X_{6})]^{2} \\
&= Cov(X_{5}^{2},X_{6}^{2}) + E(X_{5}^{2})E(X_{6}^{2}) - [Cov(X_{5},X_{6}) + E(X_{5})E(X_{6})]^{2} ,
\end{aligned}
\end{equation*}
using the property that  $U_{i}$'s are mutually independent for $i = 1,2,3$, it is straightforward to express $Var(X_{5}X_{6})$ as a function of $(\theta_1,\theta_2,\theta_3)$ and the moments of $U_{i}, i=1,2,3$. 

After calculating $Var(X_{1}X_{2})$, $Var(X_{3}X_{4})$, and $Var(X_{5}X_{6})$, we can calculate $ST_{\{X_{1}X_{2}\}}$, $ST_{\{X_{3}X_{4}\}}$, and $ST_{\{X_{5}X_{6}\}}$ as follows: 

\begin{equation*}
\begin{aligned}
\label{eq:appendfin3}
ST_{\{X_{1}X_{2}\}} &= \frac{Var(X_{1}X_{2})}{Var(X_{1}X_{2})+Var(X_{3}X_{4})+Var(X_{5}X_{6})}, \\
ST_{\{X_{3}X_{4}\}} &= \frac{Var(X_{3}X_{4})}{Var(X_{1}X_{2})+Var(X_{3}X_{4})+Var(X_{5}X_{6})}, \\
ST_{\{X_{5}X_{6}\}} &= \frac{Var(X_{5}X_{6})}{Var(X_{1}X_{2})+Var(X_{3}X_{4})+Var(X_{5}X_{6})}. \\
\end{aligned}
\end{equation*}
{
\color{black}
\subsection*{A.3. The correlation matrix for the loads in Example 4 in Section~\ref{sec:4}}
The correlation matrix for the loads listed below is estimated using $10^{6}$ random observations.

\begin{table}[h]
\centering
\caption{{\color{black}A correlation matrix for the six loads estimated based on $10^{6}$ random observations generated based on the C-vine copula in Eq.~\eqref{eq:coupula}.}}
\begin{tabular}{|>{\color{black}}c|>{\color{black}}c>{\color{black}}c>{\color{black}}c>{\color{black}}c>{\color{black}}c>{\color{black}}c|}
\hline
        & $P_{1}$ & $P_{2}$ & $P_{3}$ & $P_{4}$ & $P_{5}$ & $P_{6}$ \\ \hline
$P_{1}$ & 1.000   & 0.172   & 0.173   & 0.171   & 0.176   & 0.173   \\
$P_{2}$ & 0.172   & 1.000   & 0.032   & 0.031   & 0.033   & 0.032   \\
$P_{3}$ & 0.173   & 0.032   & 1.000   & 0.032   & 0.034   & 0.031   \\
$P_{4}$ & 0.171   & 0.031   & 0.032   & 1.000   & 0.033   & 0.031   \\
$P_{5}$ & 0.176   & 0.033   & 0.034   & 0.033   & 1.000   & 0.032   \\
$P_{6}$ & 0.173   & 0.032   & 0.031   & 0.031   & 0.032   & 1.000   \\ \hline
\end{tabular}
\end{table}

\subsection*{A.4. The response surface model used for simulating the output $Y$ in Example 4 in Section~\ref{sec:4}}
The response surface model used to simulate the output $Y$ is provided in \cite{lee2006response} as follows:
\begin{equation}
\label{eq:truss_out}
\begin{aligned}
Y &= 2.8070 + 1.2598E^{\prime}_{1} + 0.2147E^{\prime}_{2} + 1.2559A^{\prime}_{1} + 0.2133A^{\prime}_{2} - 0.1510P^{\prime}_{1} - 0.4238P^{\prime}_{2} - \\ &0.6100P^{\prime}_{3} - 0.6100P^{\prime}_{4} -0.4238P^{\prime}_{5} - 0.1510P^{\prime}_{6} - 0.1978E^{\prime2}_{1} - 0.0362E^{\prime2}_{2} - 0.2016A^{\prime2}_{1} - \\ & 0.0346A^{\prime2}_{2}+0.0023P^{\prime2}_{1}+0.0008P^{\prime2}_{2}+0.0036P^{\prime2}_{3}+ 0.0036P^{\prime2}_{4}+0.0008P^{\prime2}_{5}+0.0023P^{\prime2}_{6}- \\
& 0.0042E^{\prime}_{1}E^{\prime}_{2} -0.3022E^{\prime}_{1}A^{\prime}_{1}-0.0110E^{\prime}_{1}A^{\prime}_{2}+0.0381E^{\prime}_{1}P^{\prime}_{1}+0.0871E^{\prime}_{1}P^{\prime}_{2}+0.1232E^{\prime}_{1}P^{\prime}_{3}+ \\
& 0.1232E^{\prime}_{1}P^{\prime}_{4}+0.0871E^{\prime}_{1}P^{\prime}_{5}+0.0346E^{\prime}_{1}P^{\prime}_{6}+0.0041E^{\prime}_{2}A^{\prime}_{1}+0.0110A^{\prime}_{1}A^{\prime}_{2}+0.0261A^{\prime}_{1}P^{\prime}_{1}+ \\
& 0.0831A^{\prime}_{1}P^{\prime}_{2}+0.1172A^{\prime}_{1}P^{\prime}_{3} + 0.1172A^{\prime}_{1}P^{\prime}_{4} + 0.0832A^{\prime}_{1}P^{\prime}_{5} + 0.0296A^{\prime}_{1}P^{\prime}_{6},
    \end{aligned}
\end{equation}
where $E^{\prime}_{i}, i =1,2$, $A^{\prime}_{i}, i=1,2$, and $P^{\prime}_{i}, i=1,2,3,4,5,6$ are the standardized inputs. For example, $E^{\prime}_{1} = \frac{E - \mu_{E_{1}}}{\sigma_{E_{1}}}$, where $\mu_{E_{1}}$ is the mean of $E_{1}$ and $\sigma_{E_{1}}$ is the standard deviation of $E_{1}$.
}
\section*{Acknowledgements}
{\color{black}The authors would like to thank the Editor and two anonymous reviewers for their feedback that helped significantly improve this article.
This work was supported in part by the National Science Foundation (NSF) under Grant CMMI-1824681}.

\section*{References}

\bibliography{mybibfile}

\end{document}